\documentclass[conference]{IEEEtran}
\IEEEoverridecommandlockouts
\usepackage{bbm}
\usepackage{tcolorbox}
\usepackage{cite}
\usepackage{amsmath,amssymb,amsfonts,subfigure}
\usepackage{bm}
\usepackage{algorithmic}
\usepackage{graphicx}
\usepackage{textcomp}
\usepackage{xcolor}
\def\BibTeX{{\rm B\kern-.05em{\sc i\kern-.025em b}\kern-.08em
    T\kern-.1667em\lower.7ex\hbox{E}\kern-.125emX}}

\usepackage{color}
\usepackage{mathtools}
\usepackage{booktabs}
\usepackage{multirow}

\usepackage{comment}
\newtheorem{theorem}{Theorem}
\newtheorem{lemma}{Lemma}

\newtheorem{proposition}{Proposition}

\IEEEoverridecommandlockouts
\allowdisplaybreaks

\begin{document}
\title{Keep, Customize, or Exit: Default Design and Token Pricing in LLM Reasoning Services \\
{\footnotesize 
\thanks{*Ahmet Bugra Gundogan and Yigit Turkmen contributed equally to this work and share first authorship. This work was supported by the TUBITAK 2232-B program (Project No: 124C533).}
}}

\author{
\IEEEauthorblockN{
Ahmet Bugra Gundogan\textsuperscript{*},
Yigit Turkmen\textsuperscript{*},
Melih Bastopcu
}
\IEEEauthorblockA{
Department of Electrical and Electronics Engineering\\
Bilkent University, Ankara, Turkey\\
bugra.gundogan@bilkent.edu.tr,
yigit.turkmen@ug.bilkent.edu.tr,
bastopcu@bilkent.edu.tr
}

}
\maketitle

\begin{abstract}
We study a large language model (LLM) service in which a provider chooses a per-token price and a default reasoning-token allocation, while a user may accept the default, customize the allocation, or exit. Larger allocations can improve accuracy but increase token cost and latency. We model this interaction as a Stackelberg game and derive the user’s unique optimal customized allocation in closed form. For any price, the acceptable defaults form either an empty set or a compact interval. We characterize the provider’s optimal default through a three-regime rule, reduce equilibrium computation to a one-dimensional price optimization, and prove the existence of the equilibrium. We further show that defaults affect the implemented reasoning allocation only when users value the convenience of avoiding customization; otherwise, every service-providing outcome implements the user’s optimal customized allocation. Experiments with two compact open-weight reasoning models on five mathematics and science benchmarks support the accuracy–token model and show how model and task characteristics determine equilibrium prices, defaults, and reasoning allocations.

\end{abstract}

\begin{IEEEkeywords}
Large language models, reasoning-token allocation, test-time computation,
pricing, Stackelberg games, default effects.
\end{IEEEkeywords}

\section{Introduction}
Large language models (LLMs) are increasingly offered through services in which users pay for token consumption and experience latency that depends on the amount of inference-time computation. Many reasoning models additionally allow the reasoning-token budgets to be adjusted, either explicitly through a token allocation or implicitly through service configurations. This creates a joint service-design problem involving price, response quality, latency, and the default reasoning allocation presented to the user. 

Recent studies on test-time scaling demonstrate that allocating
additional inference-time computation can substantially improve
reasoning performance, although the gains depend on the model, task,
and inference budget~\cite{snell2025scaling,wu2025inferencescalinglawsempirical}.
At the same time, budget-forcing and token-budget-aware methods show
that the length of the reasoning process can be explicitly controlled
rather than treated as an incidental property of generation
\cite{muennighoff-etal-2025-s1,han-etal-2025-token}. These developments make the
reasoning-token allocation a natural service parameter that can be exposed
and configured by an LLM provider.

The availability of compact open-weight reasoning models makes this
service-design problem particularly relevant. Since the service provider
controls the inference stack, it can directly enforce reasoning-token
allocations, measure model-specific generation latency, and expose
configurable reasoning allocations to users. This setting is representative
of self-hosted and resource-conscious LLM services, including private,
on-premises, and accelerator-equipped edge-server deployments.

Existing test-time compute studies typically treat the reasoning allocation
as an externally specified constraint or as a quantity selected by an
algorithmic controller. In our setting, however, the provider must
also determine how the configurable reasoning resource is priced and
which allocation is presented to the user by default. The default is
strategically distinct from the price: the price determines the marginal
cost of additional reasoning, whereas the default determines the initial
service configuration encountered by the user. Furthermore, prior work has shown
that defaults can influence decisions through convenience, inertia, and
status-quo effects
\cite{samuelson,powerofsuggestion,carroll2009optimal}.

In this work, we consider a system consisting of an LLM service provider, hereafter referred to as the LLM provider, and a representative user. The LLM provider offers a configurable reasoning-token allocation and a per-token price to the user. After observing the offer, the user may retain the default, customize the allocation, or exit the service. 
More broadly, test-time scaling enables configurable
\emph{reasoning-as-a-service}, in which inference-time computation is
exposed as an adjustable and metered service resource
\cite{liu2026mores}. This emerging paradigm
requires principled mechanisms for pricing reasoning and selecting the
default allocation presented to users. 

\subsection{Related Work}
\vspace{-0.1cm}

\subsubsection{Test-Time Compute and Reasoning Token Allocations}
Existing approaches elicit and aggregate intermediate
reasoning paths through chain-of-thought prompting, self-consistency,
verification, and search
\cite{wang2023selfconsistency,wei2022chain,cobbe2021trainingverifierssolvemath,lightman2024lets,yao2023tree}.  Previous work also
emphasizes evaluating reasoning methods under matched token allocations
\cite{wang-etal-2024-reasoning-token}. Furthermore, 
other approaches study
adaptive allocation, constrained reasoning, and anytime inference \cite{sun-etal-2025-empirical,lin2026plan,
zhang-etal-2026-budget}. 
These works generally treat the inference
allocation as an external constraint or as a quantity selected by an
algorithmic or heuristic controller. In contrast, we model the reasoning-token
allocation as an internal service-design variable. The provider jointly
selects a per-token price and a default reasoning allocation while
anticipating whether the user will retain the default, customize the
allocation, or exit the service.

\subsubsection{Pricing and Latency-Aware LLM Serving} 
Cost-efficient LLM deployment has been extensively studied through
model routing, cascading, and ensemble selection. FrugalGPT,
RouteLLM, HybridLLM, GraphRouter, and MixLLM route queries among
models with different quality, monetary cost, and latency
characteristics
\cite{chen2024frugalgpt,ong2025routellm,ding2024hybrid,
feng2025graphroutergraphbasedrouterllm,balancing_yigit,wang-etal-2025-mixllm, turkmen2026don}. Additionally, RouterBench provides a benchmark
for such routing methods \cite{hu2024routerbench}. These approaches make horizontal
choices among models, agents, or ensembles, whereas our framework
controls service quality in a different way by varying the reasoning depth of a given LLM.

Moreover, Stackelberg games have been widely used for service pricing in
cloud, edge, IoT, and computation-offloading markets \cite{cardellini2026, dynamic_trust, dING2023128429, tutuncuoglu, saxena}. More closely related studies apply
Stackelberg pricing to large-model rental and competitive LLM services, where providers set prices and users select among available
models or platforms \cite{wu2024lmaas, Guo_Bai_Jin_2026}. Accordingly, our focus is not model selection or network routing, but the joint design of pricing and default reasoning allocation within an LLM service.

The closest
related works are  \cite{Guo_Bai_Jin_2026} 
and \cite{velasco2026testtimecomputegames}. Guo \emph{et
al.}~\cite{Guo_Bai_Jin_2026} study provider pricing in a competitive LLM
routing market and develop a data-calibrated learning method for
solving the resulting mathematical program with equilibrium
constraints. Velasco \emph{et al.}~\cite{velasco2026testtimecomputegames}
study competition among providers over test-time compute. They show
that the pay-per-compute equilibrium can be socially inefficient and
propose a reverse second-price auction as a remedy, focusing on welfare
and mechanism design in a normal-form game.  In contrast, we consider a single self-hosted model whose
reasoning depth determines service quality. We jointly optimize the
per-token price and a user-overridable default reasoning allocation when
the user may keep the default, customize the allocation, or exit. To the
best of our knowledge, neither prior work models a default reasoning
allocation or a default-specific convenience benefit. Consequently,
neither contains an analogue of our default-acceptance region or our
characterization of when the default has independent allocative power.

\subsection{Main Contributions}
\vspace{-0.15cm}
Our main contributions are as follows:
\begin{itemize}
    \item Unlike prior test-time compute studies that treat the
    reasoning allocation as an externally specified constraint, we formulate
    the default reasoning allocation as an endogenous service-design
    variable in a provider--user Stackelberg game.

    \item We derive the user's unique customized reasoning allocation and
    characterize the complete default-acceptance region, including its
    feasibility condition and closed-form boundaries.

    \item We characterize the provider's optimal default through a
    three-regime solution and reduce the equilibrium computation to a
    one-dimensional price optimization. We also show that the default
    has allocative power only in the presence of a positive convenience
    benefit.
    \item We establish the existence of a Stackelberg equilibrium and reduce the
service-provision decision to a scalar comparison: the provider
serves if and only if the optimized service value is nonnegative.
    \item Through our experiments using two compact open-weight reasoning
    models across five benchmarks, we fit and assess the accuracy--token model
    and demonstrate how model and task characteristics affect
    equilibrium prices and reasoning allocations.

\end{itemize}

\section{System Model and Problem Formulation}
\label{sec:system_model}
\begin{figure}[tb]
  \centering
\includegraphics[width=0.99\columnwidth]{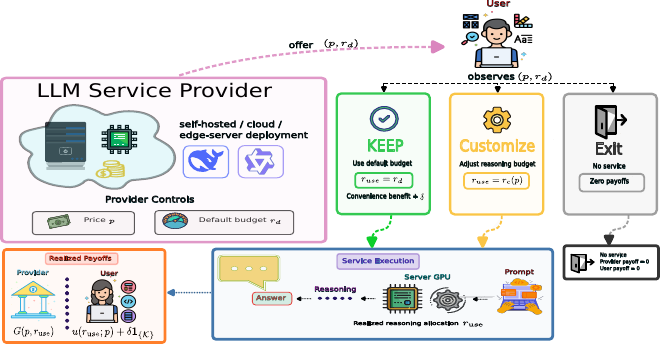}
  \caption{Interaction between an LLM service provider and a representative user. The provider selects the per-token price $p$ and default reasoning allocation $r_d$, after which the user keeps the default, customizes the allocation, or exits.}
  \label{fig:system}
  \vspace{-0.15cm}
\end{figure}

We consider a system where an LLM service provider offers a configurable reasoning-token allocation and a per-token price to a representative user, as illustrated in Fig.~\ref{fig:system}. The user submits a task from a fixed task class, while the provider operates an LLM whose reasoning-token allocation can be configured on a per-request basis.\footnote{We assume that the LLM provider processes at most one task at a time, and any request arriving while another task is being processed is rejected rather than stored in a queue.} The provider first commits to a per-token price $p \geq 0$ and a default reasoning allocation $r_d \geq 0$. After observing $(p, r_d)$, the user either keeps the default, selects a customized reasoning allocation, or exits the service.
We adopt a complete-information formulation in which all
model parameters and payoff functions are common knowledge. This
representative-user model isolates the strategic role of the default;
user and task heterogeneity are left for subsequent analysis.

\subsection{System Model and Utilities}
\label{subsec:system_model_utilities}

Let $r\geq0$ denote the reasoning-token allocation enforced by the
provider for a service request, where $r=0$ corresponds to the baseline
response without additional reasoning. The allocation is binding: the
provider ensures that $r$ reasoning tokens are generated before the
final answer is produced. Such allocations can be implemented using
decoding-time budget forcing~\cite{muennighoff-etal-2025-s1}. Thus,
$r$ is the realized reasoning-token allocation rather than an expected
token count. We treat $r$ as continuous for analytical tractability,
while the experiments use discrete enforced allocations. Given a reasoning allocation $r$, we model the probability of a correct
response $Q(r)$, the expected service latency $t(r\!)$, and the expected
number of billed tokens $T(r\!)$ as
\begin{align}
    Q(r)&=D+A\left(1-e^{-br}\right), \label{eq:accuracy_model}\\
    t(r)&=t_0+cr,\qquad T(r)=T_b+r. \label{eq:latency_billing_models}
\end{align}
Here, $D\geq 0$ denotes the baseline accuracy obtained without allocating any reasoning tokens, while $A>0$ represents the maximum additional accuracy gain attainable through reasoning. The parameter $b>0$ controls how quickly the gain saturates, thereby capturing diminishing returns from allocating additional reasoning tokens. Moreover, $t_0>0$ is the baseline service latency, $c>0$ is the incremental latency per reasoning token, and $T_b>0$ is the expected number of billed non-reasoning tokens,
comprising the input prompt and the final-answer segment.\footnote{We use the term \emph{expected} because, although the
reasoning allocation is fixed to $r$, the base latency and the number
of billed input and output tokens may vary from one task to another; $t_0$ and
$T_b$ denote their respective means.}
Thus, under the models in \eqref{eq:accuracy_model} and \eqref{eq:latency_billing_models}, additional reasoning improves accuracy at a diminishing rate while increasing latency and the number of billed tokens linearly. We abstract the details of the underlying computational infrastructure through the latency model $t(r)$ and focus on the strategic interaction between the provider’s pricing and default-configuration decisions and the user’s service and reasoning choices. Finally, we impose $D+A\leq 1$, which guarantees that $Q(r)\in[0,1]$ for all $r\geq 0$. We denote the LLM provider’s price per token by $p\geq 0$.

For a given \(p\) and \(r\), the user's baseline expected utility is
\begin{align}
    u_0(r,p)=vQ(r)-pT(r)-\theta t(r),
    \label{eq:user_baseline_utility}
\end{align}
where $v\!>\!0$ is the user's value of a correct response and $\theta>0$ is the user's
latency sensitivity. All utility terms are expressed in common
monetary-equivalent units. If the user keeps the default, the user’s utility is
\(U_{\mathcal K}(p,r_d)=u_0(r_d,p)+\delta\), where
\(\delta\geq0\) is a default-specific convenience benefit. It captures
the reduced cognitive and interaction costs of accepting a
preconfigured option and the status-quo advantage associated with
defaults \cite{samuelson,powerofsuggestion,carroll2009optimal}.
Because $\delta$ is associated specifically with retaining the default configuration, it enters both the user’s comparison between the default and customization and the user’s participation decision. Thus, $\delta$ should not be interpreted as a decision-making cost incurred only under customization.

If the user customizes the reasoning tokens, it solves
\begin{align}
    B(p)=\max_{r\geq0}u_0(r,p),\qquad
    r_c(p)=\arg\max_{r\geq0}u_0(r,p),
    \label{eq:customization_problem}
\end{align}
where the maximizer will be shown to be unique in Lemma~\ref{lem:user_customization}.
The customization utility is therefore
$U_{\mathcal C}(p,r_d)=B(p)$, which is not a function of $r_d$. The exit utility is normalized to $U_{\mathcal E}(p,r_d)=0$.

When service is provided with reasoning allocation \(r\), the
provider's expected payoff is
\begin{align}
    G(p,r)=(p-\rho)T(r)+\alpha Q(r)-\beta t(r),
    \label{eq:provider_payoff}
\end{align}
where $\rho\geq0$ is the marginal cost per billed token, $\alpha>0$ is the
weight the provider places on response accuracy, and $\beta\geq 0$ is its
per-unit latency cost. The $\alpha Q(r)$ term represents effects such as user retention,
reputation, and service-level performance, while the latency term
captures resource occupation and delay-related operating costs. If
service is not provided, the provider's payoff is zero. Having defined
both players' payoffs, we next formalize their interaction as a
sequential game in which the provider moves first by selecting the
price and the default reasoning allocation.

\subsection{Stackelberg Game and Problem Formulation}
\label{subsec:stackelberg_game}

We formulate the interaction as a Stackelberg game in which the provider is the
leader and the user is the follower. The provider moves first and
either selects \((p,r_d)\in\mathbb R_{\geq0}^2\) or chooses the
no-service action \(\mathcal N\), under which both players receive
zero payoff. After observing \((p,r_d)\), the user selects an action
\(y\in\mathcal Y=\{\mathcal K,\mathcal C,\mathcal E\}\),
corresponding to keeping the default, customizing, and exiting.
\subsubsection{Provider's Payoff and Problem}
For an offer $(p,r_d)\in\mathbb R_{\geq0}^2$ and a user action
$y\in\mathcal Y$, the provider's payoff is
\begin{align}
\Pi(p,r_d,y)=
\begin{cases}
G(p,r_d), & y=\mathcal K,\\
G\bigl(p,r_c(p)\bigr), & y=\mathcal C,\\
0, & y=\mathcal E.
\end{cases}
\label{eq:provider_payoff_profile}
\end{align}
Let $\mathcal X=\{\mathcal N\}\cup\mathbb R_{\geq0}^2$ denote the
provider's action set, and let $BR(p,r_d)\in\mathcal Y$ denote the
user's best response, specified in \eqref{eq:user_best_response}
below. The provider's induced payoff $J:\mathcal X\to\mathbb R$ is
\begin{align}
J(x)=
\begin{cases}
\Pi\bigl(p,r_d,BR(p,r_d)\bigr), & x=(p,r_d)\in\mathbb R_{\geq0}^2,\\
0, & x=\mathcal N.
\end{cases}
\label{eq:induced_payoff}
\end{align}
An action $x^\star\in\mathcal X$ is a Stackelberg solution if
\begin{align}
J(x^\star)\geq J(x),
\qquad \forall\, x\in\mathcal X.
\label{eq:stackelberg_solution}
\end{align}
If $x^\star=(p^\star,r_d^\star)$, the induced user action is
$BR(p^\star,r_d^\star)$; if $x^\star=\mathcal N$, service is not
offered.

\subsubsection{User's Best Response}
Given an offer \((p,r_d)\), the user compares the keep utility
\(U_{\mathcal K}(p,r_d)\), the customization utility \(B(p)\), and the
exit utility \(0\). To make the best response single-valued on
indifference boundaries, we impose the tie-breaking rule
\(\mathcal K\succ\mathcal C\succ\mathcal E\): the user keeps the
default whenever it is utility-maximizing and, when keeping is not
optimal, customizes rather than exits if both yield zero utility.
Under this rule,
\begin{align}
BR(p,r_d)=
\begin{cases}
\mathcal K,
& U_{\mathcal K}(p,r_d)\geq B(p),\
  U_{\mathcal K}(p,r_d)\geq0,\\
\mathcal C,
& B(p)>U_{\mathcal K}(p,r_d),\
  B(p)\geq0,\\
\mathcal E,
& U_{\mathcal K}(p,r_d)<0,\
  B(p)<0,
\end{cases}
\label{eq:user_best_response}
\end{align}
where the three cases are mutually exclusive and exhaustive. Because
an optimal provider decision may lie on an indifference boundary, the
equilibrium characterization is conditional on this tie-breaking rule;
alternative boundary behavior is discussed after the equilibrium
analysis.

In the following section, we provide the equilibrium analysis for the Stackelberg game formulated in (\ref{eq:stackelberg_solution}) and (\ref{eq:user_best_response}).

\section{Equilibrium Analysis}
\label{sec:equilibrium_analysis}
We find an equilibrium of the game by backward induction: we first characterize the
user's customized allocation \(r_c(p)\) and the best response, then
determine the set of defaults the user accepts, solve the provider's
problem at a fixed price, and finally optimize over the price.
\subsection{User's Best Response}
\label{subsec:user_best_response}

We first characterize the user's optimal reasoning allocation when the
default is rejected. For compactness, we define the auxiliary quantities
\begin{align}
    a\!=\!vA,~\! m(p)\!=\!p+\theta c,~\!
    C(p)\!=\!v(D+A)\!-\!pT_b\!-\!\theta t_0.
    \label{eq:auxiliary_definitions}
\end{align}
Since \(p\geq0\) and \(\theta,c>0\), we have
\(m(p)\geq\theta c>0\) for every feasible price. Substituting
\eqref{eq:accuracy_model} and \eqref{eq:latency_billing_models} into the user's baseline expected utility $u_0(r,p)$ in 
\eqref{eq:user_baseline_utility} gives
\begin{align}
    u_0(r,p)=C(p)-ae^{-br}-m(p)r.
    \label{eq:user_utility_compact}
\end{align}
Next, we show that $u_0(r,p)$ is strictly concave in $r$, which yields
the unique customized allocation $r_c(p)$.

\begin{lemma}
\label{lem:user_customization}
For every \(p\geq0\), the user's objective \(u_0(r,p)\) is strictly
concave in \(r\), and the customization problem in
\eqref{eq:customization_problem} admits a unique solution given by
\begin{align}
r_c(p)=
\begin{cases}
0, & m(p)\geq ab,\\[2mm]
\dfrac{1}{b}\log\!\left(\dfrac{ab}{m(p)}\right),
& m(p)<ab.
\end{cases}
\label{eq:customized_allocation_closed_form}
\end{align}
The corresponding customization value is
\begin{align}
\!\!B(p)\!=\!\!
\begin{cases}
vD-pT_b-\theta t_0,
& m(p)\geq ab,\\[2mm]
C(p)\!-\!\dfrac{m(p)}{b}
\left(1\!+\!\log\left(\dfrac{ab}{m(p)}\right)\!\right),\!\!\!
& m(p)<ab,
\end{cases}\!\!
\label{eq:customization_value_closed_form}
\end{align}
where $\log(\cdot)$ denotes the
natural logarithm. Moreover,
\(B(p)\) is continuously differentiable and strictly decreasing for $p\geq0$, with derivative
\begin{align}
    B'(p)=-\bigl(T_b+r_c(p)\bigr)<0,
\label{eq:customization_value_derivative}
\end{align}
and satisfies \(\lim_{p\to\infty}B(p)=-\infty\).
\end{lemma}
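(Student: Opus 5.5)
The plan is to work directly from the compact form \eqref{eq:user_utility_compact}, $u_0(r,p)=C(p)-ae^{-br}-m(p)r$. Differentiating in $r$ gives $\partial_r u_0=abe^{-br}-m(p)$ and $\partial_r^2u_0=-ab^2e^{-br}<0$, since $a=vA>0$ and $b>0$. This establishes strict concavity, and hence at most one maximizer on $[0,\infty)$. Existence follows because $\partial_r u_0\to -m(p)\le-\theta c<0$ as $r\to\infty$, so $u_0\to-\infty$ and the maximum is attained on a compact interval.

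To identify the maximizer, I would use the KKT condition for a concave function on the half-line. Because $\partial_r u_0$ is strictly decreasing with value $ab-m(p)$ at $r=0$, there are two cases:
\begin{itemize}
\item If $m(p)\ge ab$, the derivative is nonpositive on $[0,\infty)$, so $r_c(p)=0$. Substituting gives $B(p)=C(p)-a=v(D+A)-vA-pT_b-\theta t_0=vD-pT_b-\theta t_0$.
\item If $m(p)<ab$, the stationary point solves $e^{-br}=m(p)/(ab)$, i.e.\ $r=\frac1b\log(ab/m(p))>0$. Substituting $ae^{-br}=m(p)/b$ yields the second line of \eqref{eq:customization_value_closed_form}.
\end{itemize}

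For the derivative I would use an envelope argument rather than differentiating the piecewise formula. In the interior branch, $B(p)=u_0(r_c(p),p)$ with $r_c$ differentiable there. Since $\partial_r u_0(r_c(p),p)=0$ and $\partial_p u_0(r,p)=-T_b-r$ (because $C'(p)=-T_b$ and $m'(p)=1$), we get $B'(p)=-(T_b+r_c(p))$. In the corner branch, direct differentiation gives $-T_b$, which again equals $-(T_b+r_c(p))$ since $r_c(p)=0$ there.

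The main obstacle is continuous differentiability at the switching price $p_0=ab-\theta c$, when this lies in $[0,\infty)$. There I would check three things:
\begin{itemize}
\item $r_c$ is continuous, because $\log(ab/m)\to0$ as $m\to ab$.
\item The two expressions for $B$ agree at $p_0$: with $m=ab$, the second formula gives $C-a$.
\item The one-sided derivatives both equal $-T_b$.
\end{itemize}
Together these show $B$ is $C^1$ with $B'=-(T_b+r_c)$ continuous everywhere. Alternatively, one could invoke Danskin's theorem with the unique, continuous maximizer.

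Strict monotonicity then follows because $B'(p)\le -T_b<0$. Integrating this bound gives $B(p)\le B(0)-T_bp\to-\infty$ as $p\to\infty$.
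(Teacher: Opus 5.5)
Your proposal is correct and follows essentially the same route as the paper's proof. The ingredients coincide: strict concavity from $\partial_r^2 u_0<0$, existence because $u_0\to-\infty$, the two-case first-order analysis at $r=0$, and an explicit junction check at $m(p)=ab$ (continuity of $r_c$, matching values of $B$, and matching one-sided derivatives $-T_b$) to obtain $B\in C^1$. The differences are only cosmetic: you get the interior-branch derivative from the envelope identity $\partial_r u_0(r_c(p),p)=0$ rather than by direct differentiation, and you get $\lim_{p\to\infty}B(p)=-\infty$ from $B(p)\le B(0)-T_bp$ rather than from the corner formula at large $p$; both are valid.
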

\begin{IEEEproof}
The first and second derivatives of $u_0(r,p)$ in \eqref{eq:user_utility_compact} with respect to $r$ are
\begin{align}
    \frac{\partial u_0(r,\!p)}{\partial r}
  \!  =\!ab e^{\!-br}\!\!-\!m(p),\!\!\!\!\!\!\!\!\qquad
    \frac{\partial^2 u_0(r,\!p)}{\partial r^2}
   \! =\!-ab^2\!e^{-\!br}\!<\!0.
    \label{eq:user_utility_derivatives}
\end{align}
Hence, $u_0(r,p)$ is strictly concave in \(r\). Moreover, $m(p)>0$ implies that $u_0(r,p)$ tends to $-\infty$ as
$r$ grows unboundedly; hence the maximum over $r\geq0$ is
attained, and strict concavity implies it is unique.

If \(m(p)\geq ab\), then we have $\left.
\frac{\partial u_0(r,p)}{\partial r}
\right|_{r=0}
=ab-m(p)\leq0.$ In this case, since the derivative is strictly decreasing in $r$, the unique
maximizer is $r_c(p)=0$. This includes the equality case
$m(p)=ab$.

If $m(p)<ab$, the derivative is positive at $r=0$ and converges to
$-m(p)<0$ as $r$ grows large. The unique interior maximizer therefore
satisfies $ab e^{-br_c(p)}=m(p),$ which yields the second case in
\eqref{eq:customized_allocation_closed_form}. Substituting the two
possible values of $r_c(p)$ into $u_0(r,p)$ gives
\eqref{eq:customization_value_closed_form}.

It remains to verify the claimed properties of $B(p)$. Within either
regime, direct differentiation of
\eqref{eq:customization_value_closed_form} gives $B'(p)=-\left(T_b+r_c(p)\right).$ At the junction $m(p)=ab$, the interior expression
$\tfrac{1}{b}\log(ab/m(p))$ equals zero, so the two branches of
\eqref{eq:customized_allocation_closed_form} coincide and $r_c(p)$ is
continuous. Substituting $m(p)=ab$ into the interior expression for
$B(p)$ yields
\[
    C(p)-\frac{ab}{b}
    =
    vD-pT_b-\theta t_0,
\]
which matches the corner branch
in \eqref{eq:customization_value_closed_form}. Likewise, the interior
derivative $-(T_b+r_c(p))$ equals $-T_b$ at the junction, since
$r_c(p)=0$ there, matching the derivative in the corner regime.
Thus, $B(p)$ is continuously differentiable across the junction.
Since $T_b>0$ and $r_c(p)\geq0$,
\eqref{eq:customization_value_derivative} implies that $B(p)$ is
strictly decreasing. Finally, since $m(p)=p+\theta c$ is increasing
and unbounded in $p$, we have $m(p)\geq ab$ for all sufficiently
large $p$, in which case $B(p)=vD-pT_b-\theta t_0$ decreases
without bound as $p$ grows large.
\end{IEEEproof}

Lemma~\ref{lem:user_customization} shows that the user selects a
positive reasoning allocation if and only if
\begin{align}
    p+\theta c<vAb.
    \label{eq:positive_customization_condition}
\end{align}
Thus, additional reasoning is selected precisely when the marginal
accuracy value of the first reasoning token, $vAb$, exceeds its
effective marginal cost, $p+\theta c$, comprising the price and the
per-token latency cost. In particular, if $\theta c\geq vAb$,
then $r_c(p)=0$ for every $p\geq0$. Substituting
\eqref{eq:customized_allocation_closed_form} and
\eqref{eq:customization_value_closed_form} into
\eqref{eq:user_best_response} completes the characterization of the
user's best response for every given pair $(p,r_d)$.

\subsection{Default Acceptance Region}
\label{subsec:default_acceptance_region}
We next characterize the default reasoning allocations that the
provider can induce at a given price. Under the tie-breaking rule in
\eqref{eq:user_best_response}, the user keeps a default $r_d$ if and
only if keeping yields at least as much utility as both customization
and exit. Accordingly, we define the \emph{acceptance region}
\begin{align}
\mathcal D(p)
=
\left\{
r_d\geq0:
U_{\mathcal K}(p,r_d)\geq B(p),\
U_{\mathcal K}(p,r_d)\geq0
\right\},
\label{eq:default_acceptance_region}
\end{align}
so that $BR(p,r_d)=\mathcal K$ if and only if $r_d\in\mathcal D(p)$. Using $U_{\mathcal K}(p,r_d)=u_0(r_d;p)+\delta$, this set can be
written as
\begin{align}
\mathcal D(p)
=
\left\{r_d\geq0: u_0(r_d;p)\geq h(p)\right\},
\label{eq:acceptance_threshold}
\end{align}
where $h(p)=\max\left\{B(p)-\delta,\,-\delta\right\}.$ When $B(p)\geq0$, the customization constraint determines the
threshold, and $h(p)=B(p)-\delta$. When $B(p)<0$, customization is
dominated by exit, so the participation constraint determines the
threshold, giving $h(p)=-\delta$. Using
\eqref{eq:auxiliary_definitions} and \eqref{eq:user_utility_compact}, we
define
\begin{align}
\!K(p)\!=\!C(p)-h(p),
\quad
z(p)\!=\!
-\frac{ab}{m(p)}
e^{-\frac{bK(p)}{m(p)}},\!\!
\label{eq:K_z_definitions}
\end{align}
which will be used in the following lemma. 
\begin{lemma}
\label{lem:default_acceptance_region}
For every $p\geq0$, the acceptance region $\mathcal D(p)$ is nonempty
if and only if
\begin{align}
    B(p)+\delta\geq0.
    \label{eq:acceptance_nonempty_condition}
\end{align}
When \eqref{eq:acceptance_nonempty_condition} holds, we have
$z(p)\in[-1/e,0)$, and $\mathcal D(p)$ is a compact interval
(possibly a singleton) given by
\begin{align}
\mathcal D(p)
=
\begin{cases}
[0,\overline r(p)], & K(p)\geq a,\\[1mm]
[\underline r(p),\overline r(p)], & K(p)<a,
\end{cases}
\label{eq:acceptance_interval}
\end{align}
where $\overline r(p)
=
\frac{K(p)}{m(p)}
+\frac{1}{b}W_0\bigl(z(p)\bigr)$ and $\underline r(p)=
\frac{K(p)}{m(p)}
+\frac{1}{b}W_{-1}\bigl(z(p)\bigr)$. 
Here, $W_0$ and $W_{-1}$ denote the principal and lower real branches
of the Lambert $W$ function, respectively.
\end{lemma}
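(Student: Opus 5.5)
The plan is to work directly from the threshold form \eqref{eq:acceptance_threshold}. By \eqref{eq:user_utility_compact}, $r_d\in\mathcal D(p)$ holds exactly when $g(r_d)\geq 0$, where
\[
g(r)=u_0(r,p)-h(p)=K(p)-ae^{-br}-m(p)r .
\]
By Lemma~\ref{lem:user_customization}, $g$ is strictly concave on $[0,\infty)$ and tends to $-\infty$ as $r\to\infty$, because $m(p)>0$. Hence its superlevel set $\{r\geq0: g(r)\geq0\}$ is a closed, bounded interval. It is nonempty if and only if $\max_{r\geq0} g(r)\geq 0$.

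Since $h(p)$ does not depend on $r$, this maximum equals $B(p)-h(p)$. Now $B(p)-h(p)=\min\{\delta,\,B(p)+\delta\}$, and $\delta\geq0$, so the maximum is nonnegative exactly when $B(p)+\delta\geq0$. This proves the nonemptiness criterion \eqref{eq:acceptance_nonempty_condition}.

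Next I locate the endpoints. I solve $g(r)=0$ over all real $r$ by the substitution $s=r-K/m$, which turns the equation into $m s=-ae^{-bK/m}e^{-bs}$. Setting $w=bs$ and multiplying by $(b/m)e^{w}$ gives
\[
we^{w}=-\frac{ab}{m}\,e^{-bK/m}=z(p).
\]
Thus the real roots are $r=K/m+W_k(z)/b$ for the real branches $k\in\{0,-1\}$. Real roots exist if and only if $z\geq -1/e$, and $z<0$ holds automatically.

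To show $z(p)\geq -1/e$ under \eqref{eq:acceptance_nonempty_condition}, I consider $g$ on all of $\mathbb R$. Its unconstrained maximum is at $r^\ast=\frac1b\log(ab/m)$, and equals $K-\frac{m}{b}\bigl(1+\log(ab/m)\bigr)$. A direct computation shows that this quantity is $\geq0$ if and only if $\frac{ab}{m}e^{-bK/m}\leq e^{-1}$, i.e.\ $z\geq -1/e$. Because $\max_{\mathbb R} g\geq \max_{[0,\infty)} g\geq 0$, the bound follows. This is the step I expect to need the most care, mainly checking the algebra of the equivalence and the branch conventions. The two real roots on $\mathbb R$ are then $\underline r\leq\overline r$, with $W_{-1}\leq W_0$ giving the ordering. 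They coincide, giving a singleton, when $z=-1/e$.

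Finally, I intersect $[\underline r,\overline r]$ with $[0,\infty)$, splitting cases by the sign of $g(0)=K-a$.

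If $K\geq a$, then $g(0)\geq0$, so $0$ lies in the real superlevel interval. The constrained set is therefore $[0,\overline r]$, and $\overline r\geq0$ holds automatically.

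If $K<a$, then $g(0)<0$ while the constrained set is nonempty, so the set lies strictly to the right of $0$. In that case both roots are positive and the set is $[\underline r,\overline r]$.

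Compactness in both cases follows from the first step.
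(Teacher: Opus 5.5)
Your proof is correct and follows the paper's argument. Both go through the superlevel set of the strictly concave $u_0(\cdot,p)$, get nonemptiness from $B(p)\geq h(p)$, solve the boundary equation with the Lambert-$W$ substitution, and finish by splitting on $K(p)\geq a$ versus $K(p)<a$. The one difference is the step $z(p)\geq -1/e$: you get it at once by maximizing $g$ over all of $\mathbb{R}$, whereas the paper splits into $ab>m(p)$ and $ab\leq m(p)$ (the latter using $K(p)\geq a$ and $qe^{-q}\leq 1/e$), so your version avoids that case split.
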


\begin{IEEEproof}
By \eqref{eq:acceptance_threshold}, $\mathcal D(p)$ is the
super-level set $\{r_d\geq0: u_0(r_d,p)\geq h(p)\}$. By
Lemma~\ref{lem:user_customization}, $u_0(\cdot,p)$ is strictly
concave and decreases without bound as $r$ grows large; hence every
nonempty super-level set over $r\geq0$ is a compact interval,
possibly a singleton, and $\mathcal D(p)$ is nonempty if and only if
$B(p)\geq h(p)$. Since $\delta\geq0$, the inequality
$B(p)\geq B(p)-\delta$ always holds, while $B(p)\geq-\delta$ is
equivalent to \eqref{eq:acceptance_nonempty_condition}.

By \eqref{eq:user_utility_compact}, the boundary equation
$u_0(r;p)=h(p)$ becomes
\[
    ae^{-br}+m(p)r=K(p).
\]
Substituting $y=K(p)-m(p)r$, which equals $ae^{-br}>0$ on the
boundary, yields
$-\tfrac{by}{m(p)}e^{-\tfrac{by}{m(p)}}=z(p)$, and the
two real branches of the Lambert $W$ function give
\[
    r=\frac{K(p)}{m(p)}+\frac{1}{b}W_k\bigl(z(p)\bigr),
    \qquad k\in\{0,-1\}.
\]
These roots are real whenever $\mathcal D(p)$ is nonempty. If
$ab>m(p)$, then $B(p)$ is given by the interior branch of
\eqref{eq:customization_value_closed_form}, and $B(p)\geq h(p)$
becomes $K(p)\geq\tfrac{m(p)}{b}\bigl[1+\log(ab/m(p))\bigr]$, which
is equivalent to $z(p)\geq-1/e$. If $ab\leq m(p)$, then
$u_0(\cdot,p)$ is nonincreasing on $r\geq0$, nonemptiness reduces to
$u_0(0,p)\geq h(p)$, i.e., $K(p)\geq a$, and setting
$q=ab/m(p)\leq1$ gives $z(p)\geq-qe^{-q}\geq-1/e$, since $qe^{-q}$
attains its maximum $1/e$ at $q=1$. Since $z(p)<0$ for every $p$, both branches are well defined.

Finally, $u_0(0,p)\geq h(p)$ is equivalent to $K(p)\geq a$. In this
case, the super-level set contains $r=0$, so its intersection with
$r\geq0$ is $[0,\overline r(p)]$ with the upper endpoint given by the
larger root $W_0$, proving the first case of
\eqref{eq:acceptance_interval}. If $K(p)<a$, then $r=0$ is rejected
while nonemptiness guarantees interior points with
$u_0(r;p)\geq h(p)$; by concavity, both endpoints are positive roots
of the boundary equation, and since $W_{-1}(z)\leq W_0(z)$, the lower
and upper endpoints are generated by $W_{-1}$ and $W_0$,
respectively, proving the second case. The endpoints coincide
exactly when $z(p)=-1/e$.
\end{IEEEproof}

Two boundary implications will be useful in the subsequent analysis.
First, if $\delta=0$, then $h(p)=B(p)$, and nonemptiness in
\eqref{eq:acceptance_nonempty_condition} requires $B(p)\geq0$. Since
the value $B(p)$ is attained uniquely at $r_c(p)$, we have $\mathcal D(p)=\{r_c(p)\}$ when $\delta=0.$ Thus, without a default convenience benefit, the provider cannot
induce any reasoning allocation other than the user's customized
optimum. Second, at any price satisfying $B(p)+\delta=0$, the
participation and customization constraints bind simultaneously,
$h(p)=B(p)$, and $\mathcal D(p)=\{r_c(p)\}$ holds again.
When $r_c(p)>0$, this collapse corresponds to $z(p)=-1/e$, where the
two Lambert branches coincide; when $r_c(p)=0$, it corresponds to the
boundary $K(p)=a$. Conversely, if $\delta>0$ and $B(p)+\delta>0$,
then $h(p)<B(p)$, and the acceptance region is an interval of
positive length. This is the regime in which the provider can use the
default to steer the user away from $r_c(p)$.

\subsection{Provider's Optimal Default Reasoning for a Fixed Price}
\label{subsec:fixed_price_default}
We next determine the provider's optimal default at a fixed price.
First, we show that it suffices to consider defaults in the
acceptance region. Suppose an offer $(p,r_d)$ induces the user to
customize. By \eqref{eq:user_best_response}, customization requires
$B(p)\geq0$. If the provider instead offers the default
$r_d=r_c(p)$, then
\[
U_{\mathcal K}(p,r_c(p))
=
u_0(r_c(p),p)+\delta
=
B(p)+\delta
\geq B(p)\geq0,
\]
so $r_c(p)\in\mathcal D(p)$ and the user keeps this default under the
tie-breaking rule. The implemented allocation and provider payoff
remain $r_c(p)$ and $G(p,r_c(p))$, respectively; hence every
customization outcome is replicated by an accepted default.
Similarly, an offer that induces exit yields the provider payoff
zero, which is also attained by the no-service action $\mathcal N$.
Therefore, in maximizing $\Pi$, it is without loss of optimality to
restrict attention to $r_d\in\mathcal D(p)$ whenever
$\mathcal D(p)\neq\emptyset$, and to $\mathcal N$ otherwise.

Consequently, for any price $p$ with $\mathcal D(p)\neq\emptyset$,
the provider's optimal service-providing default solves
\begin{align}
    r_d^\dagger(p)
    \in
    \arg\max_{r_d\in\mathcal D(p)}G(p,r_d).
    \label{eq:fixed_price_provider_problem}
\end{align}
By Lemma~\ref{lem:default_acceptance_region}, we write
$\mathcal D(p)=[\underline r(p),\overline r(p)]$, with the convention
$\underline r(p)=0$ when $K(p)\geq a$. Substituting
\eqref{eq:accuracy_model} and \eqref{eq:latency_billing_models} into
\eqref{eq:provider_payoff} gives
\begin{align}
\!\!G(p,r)
\!=\!
(p\!-\!\rho)T_b\!+\!\alpha D\!-\!\beta t_0
\!+\!\gamma(p)\,r+\alpha A\left(1\!-\!e^{-br}\right),\!\!
\label{eq:provider_payoff_compact}
\end{align}
where
\begin{align}
    \gamma(p)=p-\rho-\beta c
    \label{eq:provider_marginal_coefficient}
\end{align}
is the provider's net marginal revenue per reasoning token, after
accounting for the token-generation cost $\rho$ and the per-token
latency cost $\beta c$. In the following proposition, we characterize the provider's optimal reasoning allocation for a given $p$.

\begin{proposition}
\label{prop:fixed_price_default}
For every $p\geq0$ such that
$\mathcal D(p)=[\underline r(p),\overline r(p)]\neq\emptyset$, the
fixed-price problem in \eqref{eq:fixed_price_provider_problem} has a
unique solution given by
\begin{align}
r_d^\dagger(p)
=
\begin{cases}
\overline r(p),
& \gamma(p)\geq0,\\[2mm]
\underline r(p),
& \gamma(p)\leq-\alpha Ab,\\[2mm]
\operatorname{Proj}_{\mathcal D(p)}
\bigl(\widehat r(p)\bigr),
& -\alpha Ab<\gamma(p)<0,
\end{cases}
\label{eq:fixed_price_optimal_default}
\end{align}
where
\begin{align}
    \widehat r(p)
    =
    \frac{1}{b}
    \log\left(-
    \frac{\alpha Ab}{\gamma(p)}
    \right)
    \label{eq:provider_unconstrained_allocation}
\end{align}
and
$\operatorname{Proj}_{\mathcal D(p)}(x)
=\min\{\overline r(p),\max\{\underline r(p),x\}\}$
denotes the projection onto $\mathcal D(p)$.
\end{proposition}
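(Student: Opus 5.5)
The plan is to treat \eqref{eq:fixed_price_provider_problem} as maximizing a one-dimensional strictly concave function over a compact interval. Uniqueness and the three-regime formula then follow from the sign structure of the derivative.

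\textbf{Step 1: concavity of $G$.} From the compact form \eqref{eq:provider_payoff_compact}, for fixed $p$,
\begin{align}
\frac{\partial G(p,r)}{\partial r}=\gamma(p)+\alpha Ab\,e^{-br},\qquad
\frac{\partial^2 G(p,r)}{\partial r^2}=-\alpha Ab^2e^{-br}<0,
\end{align}
using $\alpha,A,b>0$. Hence $G(p,\cdot)$ is continuous and strictly concave on $\mathbb R_{\geq0}$. By Lemma~\ref{lem:default_acceptance_region}, $\mathcal D(p)=[\underline r(p),\overline r(p)]$ is a nonempty compact interval, possibly a singleton, with $\underline r(p)\geq0$. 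A maximizer therefore exists by Weierstrass, and strict concavity makes it unique. This already settles uniqueness, including the singleton case where all three formulas return the same point.

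\textbf{Step 2: the three regimes.} The derivative $g(r)=\gamma(p)+\alpha Ab\,e^{-br}$ is strictly decreasing in $r$, with $g(0)=\gamma(p)+\alpha Ab$ and $g(r)\to\gamma(p)$ as $r\to\infty$.
\begin{itemize}
\item If $\gamma(p)\geq0$, then $g(r)>0$ for all $r\geq0$. So $G$ is strictly increasing and the maximizer over the interval is $\overline r(p)$.
\item If $\gamma(p)\leq-\alpha Ab$, then $g(r)\leq \alpha Ab(e^{-br}-1)<0$ for $r>0$. So $G$ is strictly decreasing on $[0,\infty)$ and the maximizer is $\underline r(p)$.
\item If $-\alpha Ab<\gamma(p)<0$, then $g(0)>0$ and $g(\infty)<0$. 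By continuity and strict monotonicity of $g$ there is a unique stationary point, solving $\alpha Ab\,e^{-br}=-\gamma(p)$. This is exactly $\widehat r(p)$ in \eqref{eq:provider_unconstrained_allocation}, and it is positive because $-\alpha Ab/\gamma(p)>1$.
\end{itemize}

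\textbf{Step 3: the projection.} For the interior regime, I will invoke the standard fact that a strictly concave differentiable function on an interval, with unconstrained maximizer $\widehat r$, is maximized at the projection of $\widehat r$ onto that interval. The argument is direct. $G$ is increasing on $[0,\widehat r(p)]$ and decreasing on $[\widehat r(p),\infty)$. So if $\widehat r(p)<\underline r(p)$ the optimum is $\underline r(p)$, if $\widehat r(p)>\overline r(p)$ it is $\overline r(p)$, and otherwise it is $\widehat r(p)$ itself. This is precisely $\operatorname{Proj}_{\mathcal D(p)}(\widehat r(p))$.

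The only real care needed is at the regime boundaries. At $\gamma(p)=0$, $g>0$ still holds strictly. At $\gamma(p)=-\alpha Ab$, $g(0)=0$ and $g<0$ for $r>0$, so the monotonicity is strict on $r\geq0$ and the endpoint formula stays valid. No step beyond this is an obstacle; the Lambert-$W$ form of the endpoints is never needed, only that they are the endpoints of a nonempty compact interval in $\mathbb R_{\geq0}$.
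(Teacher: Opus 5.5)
Your proposal is correct and follows the paper's proof essentially step for step. Both arguments get strict concavity of $G(p,\cdot)$ from $\partial^2 G/\partial r^2=-\alpha Ab^2e^{-br}<0$, read off the three regimes from the sign of $\gamma(p)+\alpha Ab\,e^{-br}$, and obtain the interior case as the projection of the stationary point $\widehat r(p)$ onto the compact interval $\mathcal D(p)$, with uniqueness following from strict concavity.
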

\begin{IEEEproof}
For fixed $p$, differentiating $G(p,r)$ in  \eqref{eq:provider_payoff_compact}
with respect to $r$ gives $\frac{\partial G(p,r)}{\partial r} = \gamma(p)+\alpha Ab e^{-br}$ and $\frac{\partial^2 G(p,r)}{\partial r^2} = -\alpha Ab^2e^{-br}<0$. Thus, $G(p,r)$ is strictly concave in $r$ with strictly decreasing
marginal payoff, and the maximizer over the compact interval
$\mathcal D(p)$ is unique.

If $\gamma(p)\geq0$, then $\frac{\partial G(p,r)}{\partial r}>0$ for every $r\geq0$, so
$G(p,\cdot)$ is strictly increasing on $\mathcal D(p)$ and the
maximizer is $\overline r(p)$.

If $\gamma(p)\leq-\alpha Ab$, then $\left.\frac{\partial G(p,r)}{\partial r}\right|_{r=0}\leq0$, and since
$\frac{\partial G(p,r)}{\partial r}$ is strictly decreasing, $\frac{\partial G(p,r)}{\partial r}<0$ for every $r>0$.
Hence $G(p,\cdot)$ is nonincreasing on $r\geq0$ and strictly
decreasing away from $r=0$, so the maximizer is $\underline r(p)$.

Finally, suppose $-\alpha Ab\!<\!\gamma(p)\!<\!0$. Then $\left.\frac{\partial G(p,r)}{\partial r}\right|_{r=0}\!\!>\!0$, while
$\frac{\partial G(p,r)}{\partial r}$ decreases toward the negative limit $\gamma(p)$ as $r$
grows large. The unconstrained maximizer is therefore the unique
solution of $\left.\frac{\partial G(p,r)}{\partial r}\right|_{r=\widehat r(p)} =0$, which gives
\eqref{eq:provider_unconstrained_allocation}. By strict concavity,
the maximizer over the interval $\mathcal D(p)$ is the projection of
$\widehat r(p)$ onto $\mathcal D(p)$, completing the proof.
\end{IEEEproof}

Proposition~\ref{prop:fixed_price_default} identifies three provider
regimes. When $\gamma(p)\geq0$, an additional reasoning token yields
a nonnegative net margin and a strictly positive accuracy
contribution, so the provider selects the largest accepted default.
When $\gamma(p)\leq-\alpha Ab$, the marginal payoff is nonpositive
already at $r=0$, and the provider selects the smallest accepted
default. In the intermediate regime, the provider has a unique
preferred allocation $\widehat r(p)$ and implements the closest
allocation permitted by the acceptance region.

For the price-optimization problem, define the optimal
accepted-default payoff at price $p$ as
\begin{align}
    V(p)
    =
    G\bigl(p,r_d^\dagger(p)\bigr)
    =
    \max_{r_d\in\mathcal D(p)}G(p,r_d),
    \quad
    \mathcal D(p)\neq\emptyset.
    \label{eq:fixed_price_value}
\end{align}
The value $V(p)$ is the provider's optimal payoff conditional on
serving at price $p$; the comparison with the no-service action is
deferred to the global Stackelberg game problem.

\begin{figure*}[t]
    \centering
    \includegraphics[width=0.20\textwidth]{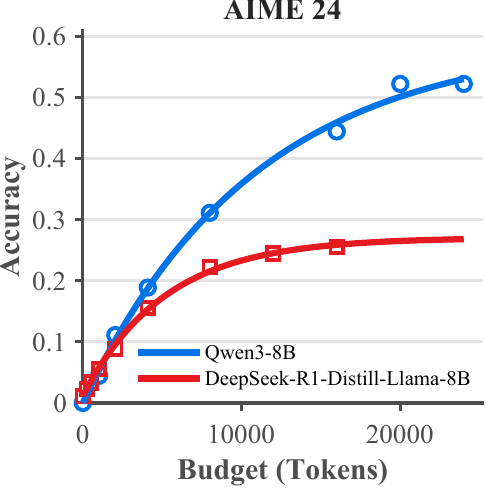}\hfill
    \includegraphics[width=0.20\textwidth]{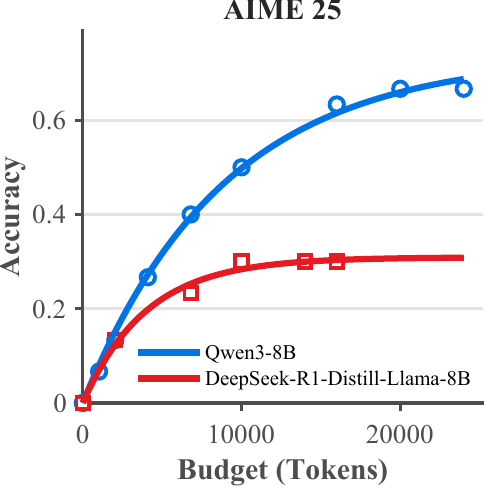}\hfill
    \includegraphics[width=0.20\textwidth]{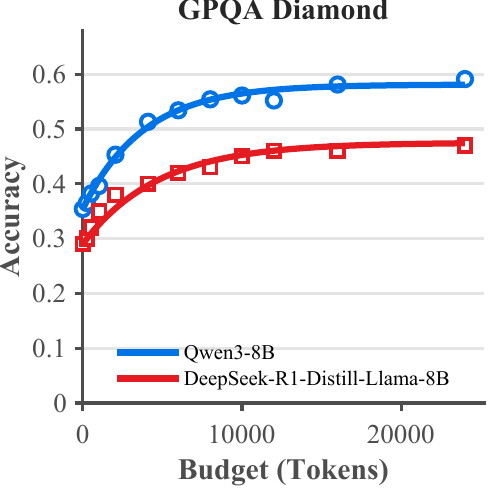}\hfill
    \includegraphics[width=0.20\textwidth]{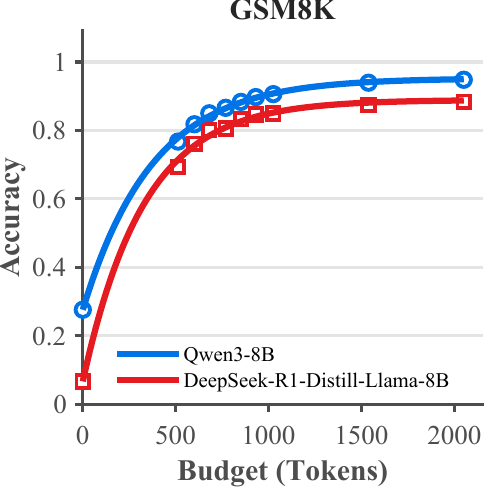}\hfill
    \includegraphics[width=0.20\textwidth]{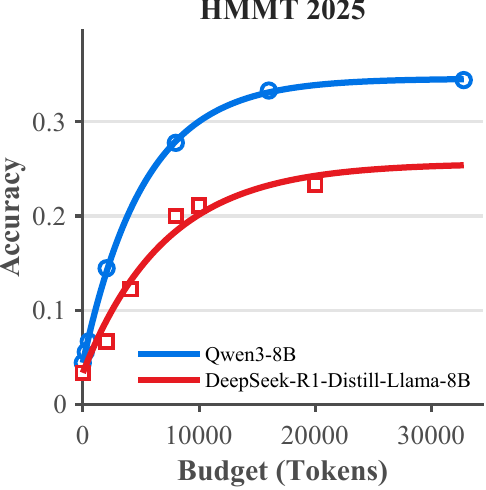}
    
    \vspace{-0.2cm}
    \caption{Performance across five reasoning datasets for Qwen3-8B and DeepSeek-R1-Distill-Llama-8B models. Hollow markers denote empirical dataset evaluations, while solid bold lines represent the theoretical accuracy expression ($Q(r) = D + A(1 - e^{-br})$) fitted to the data.}
    \label{fig:acc_token_plots}
\end{figure*}
\subsection{Price Optimization and Stackelberg Equilibrium}
\label{subsec:price_optimization}
We now optimize the fixed-price value in \eqref{eq:fixed_price_value}
over $p$. By Lemma~\ref{lem:default_acceptance_region}, the set
of prices at which an accepted default can be induced is
\begin{align}
\mathcal P_\delta
=
\{p\geq0:\mathcal D(p)\neq\emptyset\}
=
\{p\geq0:B(p)+\delta\geq0\}.
\label{eq:feasible_price_set}
\end{align}
Whenever $\mathcal P_\delta\neq\emptyset$, define the optimal service
payoff and an associated maximizer as
\begin{align}
V_{\mathrm{serv}}^\star
=
\max_{p\in\mathcal P_\delta}V(p),
\qquad
p^\star\in\arg\max_{p\in\mathcal P_\delta}V(p),
\label{eq:best_service_value}
\end{align}
with $r_d^\star=r_d^\dagger(p^\star)$. The maximizer in
\eqref{eq:best_service_value} need not be unique; $p^\star$ denotes
an arbitrary selection from the set of maximizers, and all subsequent
statements involving $(p^\star,r_d^\star)$ hold for every such
selection. In particular, while the equilibrium \emph{value} is
unique, the equilibrium \emph{price} may be set-valued.
\begin{theorem}
\label{thm:stackelberg_characterization}
If $B(0)+\delta<0$, then $\mathcal P_\delta=\emptyset$ and
$\mathcal N$ is a Stackelberg equilibrium. Otherwise,
$\mathcal P_\delta=[0,\bar p_\delta]$, where $\bar p_\delta$ is the
unique root of $B(p)+\delta=0$ and
$\mathcal D(\bar p_\delta)=\{r_c(\bar p_\delta)\}$. The function
$V(p)$ is continuous on $[0,\bar p_\delta]$, so the maximum in
\eqref{eq:best_service_value} is attained, and the provider's payoff at the
Stackelberg equilibrium is
\begin{align}
    V_{\mathrm{SE}}^\star
    =
    \max\{V_{\mathrm{serv}}^\star,0\},
    \label{eq:global_stackelberg_value}
\end{align}
attained by any $(p^\star,r_d^\star)$ if
$V_{\mathrm{serv}}^\star\geq0$ and by $\mathcal N$ if
$V_{\mathrm{serv}}^\star\leq0$.
\end{theorem}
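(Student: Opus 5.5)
The proof has four parts: the shape of the feasible price set, continuity of $V$ on that set, compactness to get a maximizer, and the reduction of the Stackelberg problem to a comparison with the no-service action.

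\textbf{Feasible price set.} By Lemma~\ref{lem:user_customization}, $B$ is continuous and strictly decreasing on $[0,\infty)$ with $\lim_{p\to\infty}B(p)=-\infty$. Combined with the characterization $\mathcal P_\delta=\{p\geq0:B(p)+\delta\geq0\}$ from \eqref{eq:feasible_price_set}, this gives:
\begin{itemize}
\item If $B(0)+\delta<0$, then $B(p)+\delta<0$ for all $p\geq0$, so $\mathcal P_\delta=\emptyset$. Every offer $(p,r_d)$ then has $\mathcal D(p)=\emptyset$, so the user does not keep the default. Since $B(p)<-\delta\le0$, the user does not customize either, and therefore exits. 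Hence $J\equiv0$ on $\mathbb R_{\geq0}^2$, and $\mathcal N$ is optimal.
\item Otherwise, the intermediate value theorem and strict monotonicity give a unique root $\bar p_\delta\ge0$ of $B(p)+\delta=0$, and $\mathcal P_\delta=[0,\bar p_\delta]$. The identity $\mathcal D(\bar p_\delta)=\{r_c(\bar p_\delta)\}$ is the second boundary observation after Lemma~\ref{lem:default_acceptance_region}. At $\bar p_\delta$ we have $h=B$, so the super-level set of the strictly concave $u_0$ at its maximum value is the singleton $\{r_c\}$.
\end{itemize}

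\textbf{Continuity of $V$ (main step).} I would write
\[
V(p)=\max_{r\in[\underline r(p),\overline r(p)]}G(p,r),
\]
where $G$ is jointly continuous. I would then invoke Berge's maximum theorem, which requires the correspondence $p\mapsto\mathcal D(p)$ to be continuous and compact-valued. It suffices to show that the endpoints $\underline r(p)$ and $\overline r(p)$ are continuous on $[0,\bar p_\delta]$.

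First, $h(p)=\max\{B(p)-\delta,-\delta\}$ is continuous, so $K$, $m$ and $z$ are continuous, with $z(p)\in[-1/e,0)$. The branches $W_0$ and $W_{-1}$ are continuous on $[-1/e,0)$ and meet at $-1/e$. Thus $\overline r(p)$ is continuous, and the interior formula for $\underline r(p)$ is continuous wherever $K<a$.

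The delicate point is the switch of $\underline r$ between $0$ (when $K\ge a$) and the $W_{-1}$ formula (when $K<a$). To handle it, I would observe that $\underline r(p)=\max\{0,\ \text{smaller real root of } ae^{-br}+m(p)r=K(p)\}$. At $K=a$ the smaller root is exactly $0$ when $r=0$ lies on the left boundary; I would check this by plugging $r=0$ into the boundary equation.

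A cleaner alternative avoids the Lambert formulas altogether. For a strictly concave continuous $u_0(\cdot,p)$ and a continuous threshold $h(p)\le\max_r u_0(r,p)$, the super-level set endpoints vary continuously by an implicit-function/monotonicity argument. At the degenerate point $h=\max$ (only at $\bar p_\delta$), both endpoints converge to $r_c$, because $u_0$ has a unique maximizer and is uniformly coercive in $r$ for $p$ in a compact set.

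With continuity of $\mathcal D$ established, $V$ is continuous on the compact set $[0,\bar p_\delta]$. By Weierstrass, $V_{\mathrm{serv}}^\star$ is attained.

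\textbf{Equilibrium value.} By the replication argument preceding \eqref{eq:fixed_price_provider_problem}, any offer yields $J\le V_{\mathrm{serv}}^\star$ if the user keeps or customizes, and $J=0$ if the user exits. Also, $J(p^\star,r_d^\star)=V_{\mathrm{serv}}^\star$, because $r_d^\star\in\mathcal D(p^\star)$ forces $BR=\mathcal K$ under the tie-breaking rule. Hence
\[
\sup_{x} J(x)=\max\{V_{\mathrm{serv}}^\star,0\}.
\]
This supremum is attained by $(p^\star,r_d^\star)$ when $V_{\mathrm{serv}}^\star\ge0$ and by $\mathcal N$ when $V_{\mathrm{serv}}^\star\le0$.
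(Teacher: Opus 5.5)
Your proposal is correct and follows the paper's proof essentially step for step: strict monotonicity of $B$ gives the shape of $\mathcal P_\delta$, Berge's maximum theorem is applied through continuity of the Lambert-$W$ endpoints of $\mathcal D(p)$, and the replication argument compares keeping, customizing, and exiting with $\mathcal N$. Your representation $\underline r(p)=\max\{0,\,K(p)/m(p)+W_{-1}(z(p))/b\}$ is tidier than the paper's case split at $K(p_0)=a$, since the $W_{-1}$ root is continuous on all of $[0,\bar p_\delta]$ and so no junction check is needed; the one slip is the parenthetical ``only at $\bar p_\delta$'' in your alternative sketch, which is false for $\delta=0$ (then $h=B$ on the whole interval), though harmless because $\mathcal D(p)=\{r_c(p)\}$ is then continuous anyway.
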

\begin{IEEEproof}
Suppose $B(0)+\delta<0$. Since $B(p)$ is strictly decreasing
(Lemma~\ref{lem:user_customization}), $B(p)+\delta<0$ for every
$p\geq0$, so $\mathcal P_\delta=\emptyset$. Moreover, for every
$r_d\geq0$,
$U_{\mathcal K}(p,r_d)=u_0(r_d,p)+\delta\leq B(p)+\delta<0$, while
$B(p)<0$; hence the user exits under every offer, every provider
decision yields zero payoff, and $\mathcal N$ is optimal.

Suppose now $B(0)+\delta\geq0$. Since $B(p)$ is continuous, strictly
decreasing without bound
(Lemma~\ref{lem:user_customization}), the equation $B(p)+\delta=0$
has a unique solution $\bar p_\delta\geq0$, and
$\mathcal P_\delta=[0,\bar p_\delta]$, with $\bar p_\delta=0$ in the
boundary case $B(0)+\delta=0$. At $p=\bar p_\delta$, we have
$h(\bar p_\delta)=-\delta=B(\bar p_\delta)$, so an accepted default
must attain the maximum of $u_0(\cdot,\bar p_\delta)$, which is
unique; hence $\mathcal D(\bar p_\delta)=\{r_c(\bar p_\delta)\}$.

We next show that $V(p)$ is continuous on $[0,\bar p_\delta]$.
Continuity of $B(p)$ implies continuity of $h(p)$, $K(p)$, and $z(p)$, and
$z(p)\in[-1/e,0)$ by Lemma~\ref{lem:default_acceptance_region}.
Since $m(p)\geq\theta c>0$ and $W_0$ is continuous on $[-1/e,0)$,
the upper endpoint $\overline r(p)$ is continuous. The lower
endpoint is continuous within each regime of
\eqref{eq:acceptance_interval}; it remains to check a junction $p_0$
with $K(p_0)=a$. Approaching $p_0$ within the regime $K(p)<a$
requires $ab>m(p)$, so $ab/m(p_0)\geq1$; then
$z(p_0)=-\bigl(ab/m(p_0)\bigr)e^{-ab/m(p_0)}$ and
$W_{-1}(z(p_0))=-ab/m(p_0)$, giving
$\underline r(p_0)=a/m(p_0)-a/m(p_0)=0$, which matches the regime
$K(p)\geq a$. If instead $ab/m(p_0)<1$, the regime $K(p)<a$ is
infeasible near $p_0$ and no transition occurs. Hence both endpoints
are continuous. Moreover, $r_d\in\mathcal D(p)$ implies
$m(p)r_d\leq ae^{-br_d}+m(p)r_d\leq K(p)$; since
$m(p)\geq\theta c>0$ and $K(p)$ is bounded on the compact interval
$[0,\bar p_\delta]$, all accepted defaults lie in a common bounded
interval. Thus $\mathcal D(\cdot)$ is a continuous, compact-valued
correspondence on $[0,\bar p_\delta]$, and since $G$ is continuous,
Berge's maximum theorem implies that $V(p)$ is continuous; the maximum
in \eqref{eq:best_service_value} is therefore attained.

It remains to compare all provider outcomes. For
$p\in\mathcal P_\delta$: if the user keeps, the payoff is at most
$V(p)\leq V_{\mathrm{serv}}^\star$; if the user customizes, then
$r_c(p)\in\mathcal D(p)$ by the replication argument preceding
Proposition~\ref{prop:fixed_price_default}, so
$G(p,r_c(p))\leq V(p)\leq V_{\mathrm{serv}}^\star$; if the user
exits, the payoff is zero. For $p\notin\mathcal P_\delta$, we have
$B(p)+\delta<0$, and the argument of the first paragraph shows the
user exits, yielding zero. Hence no provider decision exceeds
$\max\{V_{\mathrm{serv}}^\star,0\}$, while $(p^\star,r_d^\star)$
attains $V_{\mathrm{serv}}^\star$ and $\mathcal N$ attains zero.
This proves \eqref{eq:global_stackelberg_value} and the stated
optimality cases.
\end{IEEEproof}

\textit{Tie-breaking robustness:}
Theorem~\ref{thm:stackelberg_characterization} relies on the tie-breaking rule, under which the acceptance region
$\mathcal D(p)$ is closed. Under rejection of a default at indifference, the same provider value can be approached whenever an optimal accepted default can be perturbed into the strict interior of the acceptance set, or when it occurs at $p=\bar p_\delta>0$ and the price can be reduced. When $\delta=0$, customization implements $r_c(p)$ and preserves the provider value. The boundary case $B(0)+\delta=0$ with $\delta>0$ is exceptional: the keep-favoring rule can change the provider value because $p=0$ cannot be reduced.

\section{Experimental Results}\label{Sec:num_result}
In this section, we provide empirical support for the saturating
accuracy--token model, calibrate the service parameters on two
open-weight reasoning models, and illustrate the resulting
equilibrium structure numerically.
\begin{table}[t]
    \centering
    \caption{Average billed non-reasoning tokens ($T_b$) alongside baseline latency ($t_0$) and marginal generation latency per token ($c$).}
    \label{tab:latency_metrics}
    \renewcommand{\arraystretch}{1.2} 
\begin{tabular}{lccccc}
    \toprule
    \multirow{2}{*}{\textbf{Dataset}} & \multirow{2}{*}{\textbf{$T_b$}} & \multicolumn{2}{c}{\textbf{Qwen3-8B}} & \multicolumn{2}{c}{\textbf{DeepSeek-R1-Distill-8B}} \\
    \cmidrule(lr){3-4} \cmidrule(lr){5-6}
    & & $t_0$ (s) & $c$ (s/token) & $t_0$ (s) & $c$ (s/token) \\
    \midrule
    \textbf{AIME 25} & 102 & 0.1775 & 0.000903 & 0.0878 & 0.000679 \\
    \textbf{AIME 24} & 159 & 0.2706 & 0.000760 & 0.1759 & 0.000611 \\
    \textbf{GPQA-D}  & 213 & 0.0899 & 0.001257 & 0.0057 & 0.000356 \\
    \textbf{GSM8K}   & 61  & 0.0663 & 0.000125 & 0.0118 & 0.000128 \\
    \textbf{HMMT 25} & 108 & 0.5198 & 0.001008 & 0.1787 & 0.000738 \\
    \bottomrule
\end{tabular}\end{table}
\begin{table}[t]
\centering
\caption{Fitted accuracy parameters ($D$, $A$, $b$) across datasets.}
\label{tab:parameter_fits}
\resizebox{\columnwidth}{!}{%
\begin{tabular}{@{}lcccccc@{}}
\toprule
\multirow{2}{*}{\textbf{Dataset}} & \multicolumn{3}{c}{\textbf{Qwen3-8B}} & \multicolumn{3}{c}{\textbf{DeepSeek-R1-Distill-8B}} \\
\cmidrule(lr){2-4} \cmidrule(l){5-7}
 & $D$ & $A$ & $b$ & $D$ & $A$ & $b$ \\
\midrule
\textbf{AIME 24} & 0 & 0.595 & $9.25 \times 10^{-5}$ & 0.011 & 0.260 & $1.94 \times 10^{-4}$ \\
\textbf{AIME 25} & 0 & 0.737 & $1.13 \times 10^{-4}$ & 0 & 0.309 & $2.51 \times 10^{-4}$ \\
\textbf{GPQA Diamond}   & 0.354 & 0.227 & $2.63 \times 10^{-4}$ & 0.290 & 0.185 & $2.11 \times 10^{-4}$ \\
\textbf{GSM8K}  & 0.276 & 0.677 & $2.66 \times 10^{-3}$ & 0.067 & 0.822 & $3.03 \times 10^{-3}$ \\
\textbf{HMMT 2025} & 0.044 & 0.302 & $1.89 \times 10^{-4}$ & 0.033 & 0.223 & $1.41 \times 10^{-4}$ \\
\bottomrule
\end{tabular}%
}
\end{table}

\subsection{Experimental Setup}
\label{subsec:num_setup}
We evaluate our framework using two compact open-weight reasoning models: Qwen3-8B \cite{yang2025qwen3technicalreport} and
DeepSeek-R1-Distill-Llama-8B \cite{Guo_2025} (hereafter R1-Distill-Llama-8B) across five
reasoning benchmarks: AIME 2024, AIME 2025, GPQA Diamond, GSM8K,
and HMMT 2025 \cite{rein2024gpqa, cobbe2021trainingverifierssolvemath}. We select open-weight models because they provide
direct control over the inference procedure, allowing us to impose
reasoning-token allocations and measure generation latency consistently
across budget configurations. For GSM8K, we randomly select 500 examples from the
test split using a fixed random seed. For all other benchmarks, we
evaluate the complete available evaluation split. For each question
and reasoning-allocation configuration, we generate three independently
sampled responses. The reported accuracy values are averaged over
both the benchmark questions and the three sampled responses.

We perform inference on models using the vLLM framework~\cite{vllm_paper}
in a Google Colab runtime equipped with a single NVIDIA RTX PRO 6000
Blackwell Server Edition GPU with 96~GB of memory. Thus, both models are served using a single accelerator, illustrating
the feasibility of independently hosted reasoning services based on
compact models. For both models, sampling is performed with temperature
$\tau=0.6$, nucleus-sampling parameter $\mathrm{top}\text{-}p=0.95$,
and $\mathrm{top}\text{-}k=20$. These values follow the recommended
thinking-mode configuration for Qwen3 and are also consistent with
the sampling configuration used in the DeepSeek-R1 reasoning
evaluations. We use the same instruction
that is appended to every benchmark question for both models:
\begin{quote}
\small
\texttt{Please reason step by step, and put your final answer within
\textbackslash boxed\{\}.}
\end{quote}
Additionally, to control inference-time computation, we employ a budget-forcing procedure inspired by the s1 test-time-scaling method~\cite{muennighoff-etal-2025-s1}. For a benchmark containing $N$ questions, the empirical accuracy at
allocation $\ell$ is computed as
\begin{equation}
\widehat{Q}(\ell)
=
\frac{1}{3N}
\sum_{i=1}^{N}
\sum_{j=1}^{3}
\mathbbm{1}
\left\{
\widehat{y}_{i,j}(\ell)=y_i
\right\},
\label{eq:empirical_accuracy}
\end{equation}
where $\widehat{y}_{i,j}(\ell)$ denotes the answer produced by the
$j$th sampled generation for question $i$, and $y_i$ denotes the
corresponding ground-truth answer. The measured accuracy values are
then used to fit the accuracy model in \eqref{eq:accuracy_model}. The resulting measurements and fits
are shown in Fig.~\ref{fig:acc_token_plots} and Table~\ref{tab:parameter_fits}.

We separate the model parameters into two groups according to how they
are obtained. The \emph{service parameters} $T_b$, $t_0$, and $c$ are
taken directly from the measurements reported in
Table~\ref{tab:latency_metrics}, and the
accuracy parameters $D$, $A$, and $b$ are the fitted values from
Table~\ref{tab:parameter_fits}. These characterize the underlying LLM
service and are held at their measured values throughout. The remaining quantities are \emph{economic parameters}: the marginal
token cost $\rho$, the user's accuracy value $v$ and latency
sensitivity $\theta$, the provider's accuracy weight $\alpha$ and
latency cost $\beta$, and the default convenience benefit $\delta$.
Unlike the service parameters above, these are not directly
measurable from model execution; we treat them as modeling inputs and
examine their influence through sensitivity analysis.

\subsection{Model Calibration and Validation}
We now examine whether the measured model behavior supports the
accuracy, token-consumption, and latency models introduced in Section \ref{sec:system_model}. Fig.~\ref{fig:acc_token_plots} shows the empirical accuracy
obtained at different reasoning-allocation configurations together with the
fitted exponential model $Q(r)$ in (\ref{eq:accuracy_model}).

The fitted parameters are reported in Table \ref{tab:parameter_fits}. Across the evaluated model--benchmark pairs, accuracy generally
increases with the reasoning allocation and exhibits diminishing
returns, supporting the saturating form assumed in  $Q(r)$ in  \eqref{eq:accuracy_model}. The fitted parameters also reveal substantial variation across tasks.
A larger value of $A$ indicates greater potential benefit from
additional reasoning, while a larger value of $b$ indicates that these
benefits are realized using fewer reasoning tokens. For example,
GSM8K exhibits relatively rapid saturation, whereas the competition
mathematics benchmarks require substantially larger reasoning
allocations before approaching their fitted accuracy limits. These
differences directly affect equilibrium reasoning allocations
studied in the following subsections. Table~\ref{tab:latency_metrics} reports the average input and output token count
$T_b$ and the latency parameters $t_0$ and $c$. The parameter $t_0$
captures the estimated baseline latency, whereas $c$ represents the
marginal generation latency per reasoning token. The differences
between the two models demonstrate that reasoning allocations with
similar accuracy benefits can nevertheless produce different service
latency and, consequently, different equilibrium decisions.

\subsection{Numerical Results}
\label{subsec:num_results}
Having calibrated the service model, we now examine the economic
equilibrium. The service parameters $T_b,t_0,c,D,A,$ and $b$ are fixed at their calibrated values from
Tables~\ref{tab:latency_metrics}--\ref{tab:parameter_fits}, while the
economic parameters $\rho,v,\theta,\alpha,\beta,$ and $\delta$ are modeling
inputs. Figs.~\ref{fig:numres1} and~\ref{fig:numres3} use the
baseline configuration $v=250$, $\theta=5,$ $\delta=5$,
$\rho=0.005,$ $\alpha=80,$ and $\beta=5$, which places an equilibrium in the interior pricing regime.
Fig.~\ref{fig:numres2} instead uses $v=70,$ $\theta=2,$ $\delta=0,$ $\rho=0.05,$ $\alpha=30,$ and $\beta=10$
with one parameter varied at a time as indicated in the figure.
\begin{figure}[t]
  \centering
\includegraphics[width=0.80\columnwidth]{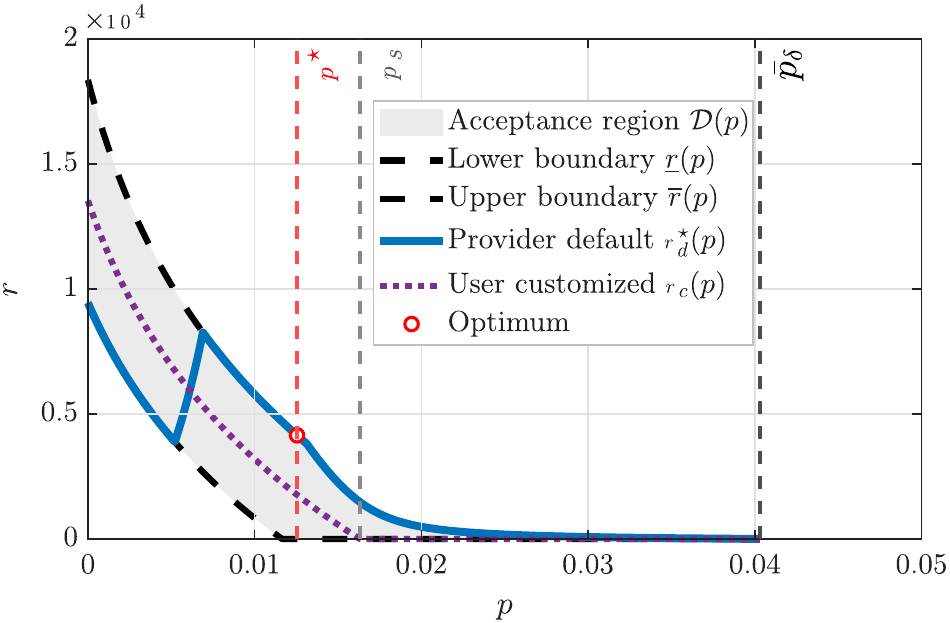}
   \vspace{-0.35cm}
  \caption{Acceptance region $\mathcal D(p)$ and provider-optimal
default $r_d^\dagger(p)$ versus price for AIME~2025 (Qwen3-8B,
baseline configuration).}
  \label{fig:numres1}
  \vspace{-0.35cm}
\end{figure}

\begin{figure}[tb]
  \centering
  \includegraphics[width=0.80\columnwidth]{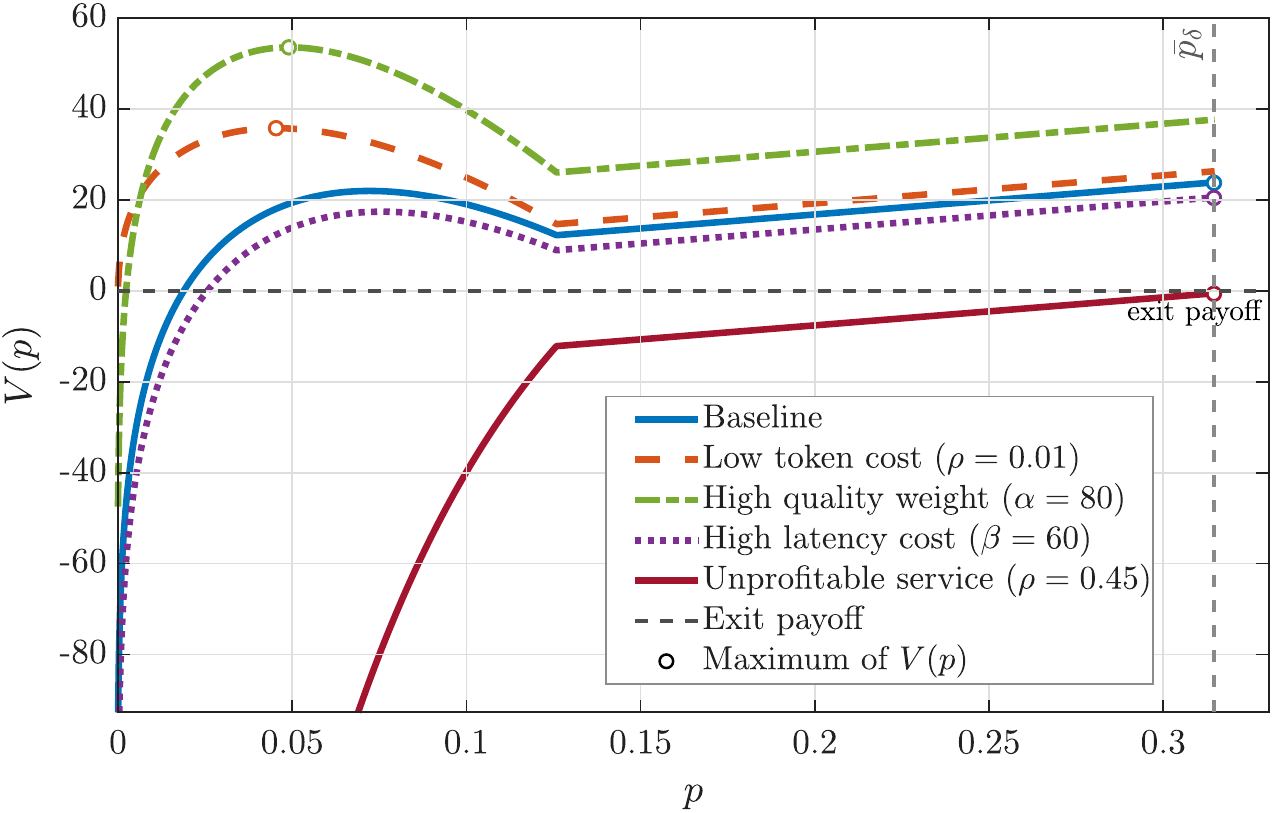}
  \vspace{-0.35cm}
  \caption{Provider value $V(p)$ versus price under alternative
parameter configurations for Qwen3-8B with $\delta=0$.}
  \label{fig:numres2}
  \vspace{-0.35cm}
\end{figure}
Fig.~\ref{fig:numres1} shows the acceptance region $\mathcal D(p)$
and the provider's optimal default policy $r_d^\dagger(p)$ for
AIME~2025. The region narrows as $p$ increases and closes at the
maximum feasible price $\bar p_\delta\approx0.040$. A second
threshold visible in the figure is the \emph{customization shutoff
price} $p_s=vAb-\theta c\approx0.016$, at which
\eqref{eq:positive_customization_condition} holds with equality: for
$p<p_s$, the marginal accuracy value of the first reasoning token
exceeds its marginal cost and the customized allocation $r_c(p)$ is
positive, whereas for $p\geq p_s$, the user selects no additional reasoning tokens under customization, i.e., $r_c(p)=0$, and the customization value reduces to the
corner branch of \eqref{eq:customization_value_closed_form}. The
policy $r_d^\dagger(p)$ traverses the three regimes of
Proposition~\ref{prop:fixed_price_default}: it starts at the lower
boundary, follows the projected interior solution, and terminates at
the upper boundary. An optimal price $p^\star\approx0.0125$, marked
by the circle, lies on the upper-boundary segment, so
$r_d^\star=\overline r(p^\star)$. Since $B(p^\star)>0$, the binding
constraint at acceptance is the comparison with customization,
$U_{\mathcal K}(p^\star,r_d^\star)=B(p^\star)$; that is, the user's
utility loss from the induced allocation relative to customizing at
$r_c(p^\star)$ exactly equals the convenience benefit $\delta$.
Under the tie-breaking rule, the user keeps the default, and the
provider extracts the full convenience margin.

Fig.~\ref{fig:numres2} shows $V(p)$ for GSM8K with $\delta=0$. With
$\delta=0$, we have $h(p)=B(p)$, so a default is accepted only if it
attains the customization value; by uniqueness of the maximizer
(Lemma~\ref{lem:user_customization}),
$\mathcal D(p)=\{r_c(p)\}$ whenever nonempty, and
$V(p)=G(p,r_c(p))$. For $p<p_s$, increasing the price raises the net
per-token margin $\gamma(p)=p-\rho-\beta c$ but reduces the induced
allocation $r_c(p)$; the interplay of these two effects produces the
interior maxima seen in the low-token-cost and high-accuracy-weight
curves. For $p\geq p_s$, we have $r_c(p)=0$ and
$V(p)=(p-\rho)T_b+\alpha D-\beta t_0$, which increases linearly in
$p$ since $T_b>0$, until the participation constraint binds at
$\bar p_\delta$. Accordingly, the low-token-cost ($\rho=0.01$) and
high-accuracy-weight ($\alpha=80$) cases attain interior maxima,
whereas the baseline and high-latency-cost ($\beta=60$) cases are
maximized at the boundary $\bar p_\delta$. For $\rho=0.45$, the
service payoff is negative at every feasible price, so the provider
selects the no-service action $\mathcal N$.

Fig.~\ref{fig:numres3} compares $r_c(p^\star)$ and $r_d^\star$
across the five benchmarks under the baseline configuration. In all
instances, $r_d^\star=\overline r(p^\star)>r_c(p^\star)$: the
provider pushes the default to the upper acceptance boundary, and
the gap $r_d^\star-r_c(p^\star)$ measures the additional reasoning
made acceptable by the convenience benefit $\delta$. For GPQA
Diamond and HMMT~2025, the optimal price exceeds the customization
shutoff price, $p^\star>p_s$, so $r_c(p^\star)=0$ and the positive
defaults are entirely provider-induced. Across benchmarks, the
allocation ordering follows the fitted service parameters: the AIME
benchmarks combine large attainable accuracy gains $A$ with slow
saturation (small $b$), producing the largest allocations; GSM8K
saturates rapidly (large $b$); and GPQA Diamond and HMMT~2025 have
smaller fitted gains, yielding the lowest allocations.

\begin{figure}[tb]
  \centering
\includegraphics[width=0.8\columnwidth]{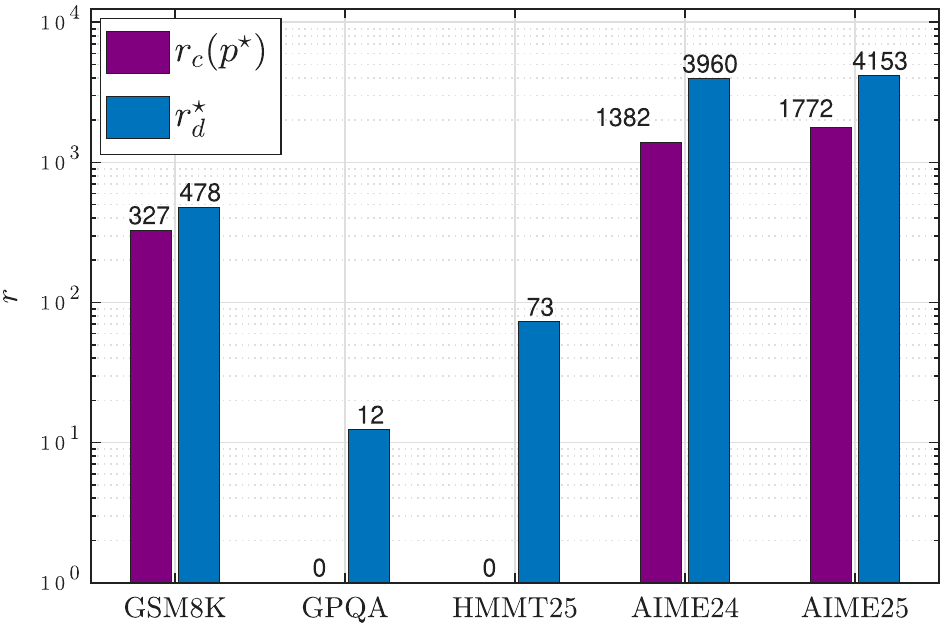}
  \vspace{-0.35cm}
  \caption{Equilibrium reasoning-token allocations across benchmarks
  for Qwen3-8B.}
  \label{fig:numres3}
  \vspace{-0.35cm}
\end{figure}
\subsection{Effect of the Convenience Benefit}
\label{subsec:delta_effect}
\begin{figure}[tb]
  \centering
  \includegraphics[width=0.8\columnwidth]{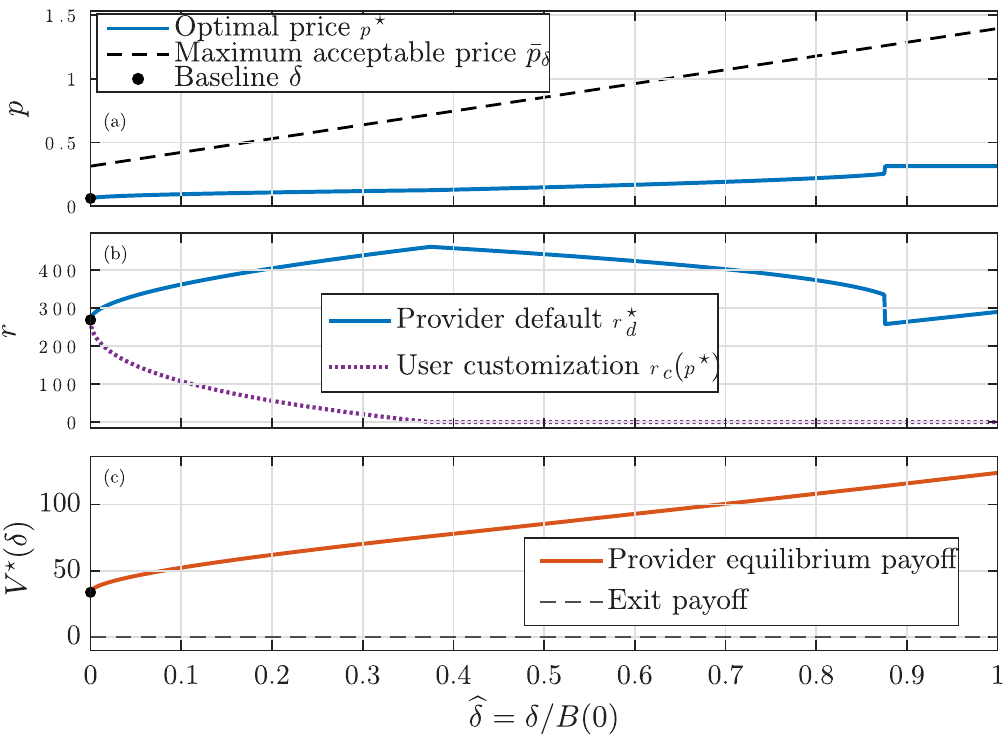}
  \vspace{-0.35cm}
  \caption{Equilibrium price, reasoning allocations, and provider
  payoff versus the normalized convenience benefit
  $\widehat\delta=\delta/B(0)$ for GSM8K (Qwen3-8B, $\alpha=50$).}
  \label{fig:numres4}
  \vspace{-0.35cm}
\end{figure}
Fig.~\ref{fig:numres4} traces an equilibrium against the normalized
convenience benefit $\widehat\delta=\delta/B(0)$ for GSM8K when $B(0)>0$. To
examine a regime in which reasoning is actively induced, we set
$\alpha=50$ and retain the remaining parameters from
Fig.~\ref{fig:numres2}. Fig.~\ref{fig:numres4}(a) shows that
$\bar p_\delta$ increases with $\widehat\delta$, since the
participation condition $B(\bar p_\delta)+\delta=0$ is then
satisfied at a higher price. An equilibrium price $p^\star$ remains
below this cap over most of the range, so the feasible-price
boundary is generally nonbinding. Fig.~\ref{fig:numres4}(b) shows
that $r_c(p^\star)$ decreases to zero as $p^\star$ crosses the
customization shutoff price $p_s$. In contrast, $r_d^\star$ first
increases, then declines, and undergoes a discrete drop near
$\widehat\delta\approx0.88$. At $\widehat\delta=0$, the acceptance
region is the singleton $\{r_c(p)\}$, so the provider cannot steer
the implemented allocation through the default; for
$\widehat\delta>0$, the gap $r_d^\star-r_c(p^\star)$ measures the
additional reasoning made acceptable by the convenience benefit.
Fig.~\ref{fig:numres4}(c) shows that the equilibrium provider payoff
is nondecreasing in $\widehat\delta$. Indeed, increasing $\delta$
weakly enlarges the acceptance region at every price, so every
provider outcome feasible at a smaller $\delta$ remains feasible. Near $\widehat\delta\approx0.88$, the two local maxima of
$V$ exchange global optimality. At the crossing, both branches are
co-optimal, so the equilibrium price correspondence is set-valued at
$\widehat\delta\approx0.88$; the plotted curves select the
global maximizer returned by our grid search, which selects the
price branch at the crossing arbitrarily. The apparent
discontinuity in $p^\star$ and $r_d^\star$ therefore reflects a
switch between co-optimal equilibria rather than a discontinuity in
the equilibrium value, which Fig.~\ref{fig:numres4}(c) confirms is continuous and
nondecreasing.

\section{Conclusion}
\label{sec:conclusion}
In this work, we studied the joint design of token pricing and a default reasoning
allocation in an LLM service, modeling the provider--user
interaction as a Stackelberg game in which the user may keep the
default, customize, or exit. We derived the user's customized
allocation in closed form, characterized the acceptance region
through its Lambert-$W$ boundaries, and reduced the provider's
problem to a one-dimensional price optimization with a three-regime
fixed-price solution, establishing the existence of a Stackelberg
equilibrium and reducing the service-provision decision to a sign comparison of the optimized service value. The analysis isolates the strategic role of the default convenience
benefit. When $\delta=0$, every accepted default coincides with the
user's customized allocation: although pricing and service provision
remain endogenous, the default has no independent allocative power.
When $\delta>0$, the acceptance region contains allocations that
differ from the user's optimum, allowing the provider to steer the
implemented reasoning allocation. Experiments on two open-weight
reasoning models across five benchmarks support the saturating
accuracy--token model and show how model- and task-dependent service
characteristics translate into distinct equilibrium prices,
defaults, and allocations.

Our framework adopts a complete-information, representative-user
model to isolate the default mechanism, abstracting from user and
task heterogeneity, private valuations, and repeated interactions.
Extensions to heterogeneous users, incomplete information, competing
providers, and dynamic pricing are natural directions for future
work, as is the question of how default design and customization
frictions should be governed when they can systematically move
reasoning allocations away from users' independently chosen levels.

\vspace{-0.4em}
\bibliographystyle{IEEEtran}
\bibliography{references}
\end{document}